\newcommand{\func}[3]{#1 \colon #2 \to #3}
\newcommand{\bigO}[1]{\mathrm{O}(#1)}
\newcommand{\set}[1]{\left\{ #1 \right\}}
\newcommand{\multiset}[1]{\left\llbracket #1 \right\rrbracket}
\newcommand{\setST}[2]{\left\{ #1 \;\middle|\; #2 \right\}}
\newcommand{\multisetST}[2]{\left\llbracket #1 \mid #2 \right\rrbracket}
\newcommand{\sequence}[1]{\left\langle #1 \right\rangle}
\newcommand{\graph}[2]{\sequence{#1, #2}}
\newcommand{\edge}[2]{\set{#1, #2}}
\newcommand{\degree}[2]{\mathrm{deg}(#1; #2)} 
\newcommand{\degreeSet}[1]{D(#1)}
\newcommand{\degreeOfColor}[3]{\mathrm{deg}(#1; #2, #3)} 
\newcommand{\coloredDegree}[2]{\mathrm{deg}(#1; #2)}
\newcommand{\coloredDegreeSet}[1]{D(#1)}
\newcommand{\UnigraphNumber}[1]{w(#1)}
\newcommand{\strongUnigraphNumber}[1]{s(#1)}
\newcommand{\vertexCoverNumber}[1]{\tau(#1)}
\begin{document}
\title{Decomposing a Graph into Unigraphs}
%
%

\author{
Takashi Horiyama\inst{1} \and
Jun Kawahara\inst{2} \and
Shin-ichi Minato\inst{3} \and
Yu Nakahata\inst{3}
}
\authorrunning{T, Horiyama et al.}
%
\institute{
Saitama University, Saitama, Japan\\
\email{horiyama@al.ics.saitama-u.ac.jp}\and
Nara Institute of Science and Technology, Nara, Japan\\
\email{jkawahara@is.naist.jp}\and
Kyoto University, Kyoto, Japan\\
\email{\{minato@i, nakahata.yu.27e@st\}.kyoto-u.ac.jp}
}

\maketitle              
\begin{abstract}
Unigraphs are graphs uniquely determined by their own degree sequence up to isomorphism.
There are many subclasses of unigraphs such as threshold graphs, split matrogenic graphs, matroidal graphs, and matrogenic graphs.
Unigraphs and these subclasses are well studied in the literature.
Nevertheless, there are few results on superclasses of unigraphs.
In this paper, we introduce two types of generalizations of unigraphs: \emph{$k$-unigraphs} and \emph{$k$-strong unigraphs}.
We say that a graph $G$ is a \emph{$k$-unigraph} if $G$ can be partitioned into $k$ unigraphs.
$G$ is a \emph{$k$-strong unigraph} if not only each subgraph is a unigraph but also the whole graph can be uniquely determined up to isomorphism, by using the degree sequences of all the subgraphs in the partition.
We describe a relation between $k$-strong unigraphs and the subgraph isomorphism problem.
We show some properties of $k$-(strong) unigraphs and algorithmic results on calculating the minimum $k$ such that a graph $G$ is a $k$-(strong) unigraph.
This paper will open many other research topics.
\keywords{Unigraph \and Degree sequence \and Subgraph isomorphism problem \and Edge-colored graph \and Unigraph number \and $k$-unigraph}
\end{abstract}

\section{Introduction}\label{sec:intro}
Unigraphs~\cite{johnson1975simple,li1975graphic} are graphs uniquely determined by their own degree sequence up to isomorphism.
The family of unigraphs contains many important graph classes such as threshold graphs~\cite{chvatal1977aggregation}, split matrogenic graphs~\cite{hammer2004splitoids}, matroidal graphs~\cite{peled1977matroidal}, and matrogenic graphs~\cite{foldes1978class}.
The relationship of inclusion among these classes is as follows:
\begin{equation}
    \mbox{threshold} \subset \mbox{split matrogenic} \subset \mbox{matroidal} \subset \mbox{matrogenic} \subset \mbox{unigraph}.
\end{equation}
Unigraphs and these subclasses are well studied in the literature.
Nevertheless, there are few results on superclasses of unigraphs.
Although there are many approaches to consider superclasses of some graph class, in this paper, we pay attention to a superclass obtained by generalizing a graph class by the partition of the edge set of a graph.
For example, the \emph{arboricity} of a graph is the minimum number of forests into which the edge set of the graph can be partitioned.
The \emph{thickness} of a graph is similarly defined by the minimum number of planar subgraphs into which its edges can be partitioned.
These values of a graph are the measures of how far the graph is from the original graph classes.

\begin{figure}[t]
    \centering
    \begin{subfigure}{.45\linewidth}
        \centering
        \includegraphics[bb=0 0 90 53, scale=0.5]{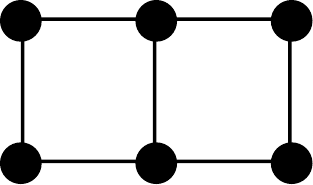}
        \caption{Domino graph.}
        \label{fig:domino}
    \end{subfigure}
    \hfill
    \begin{subfigure}{.45\linewidth}
        \centering
        \includegraphics[bb=0 0 223 97, scale=0.5]{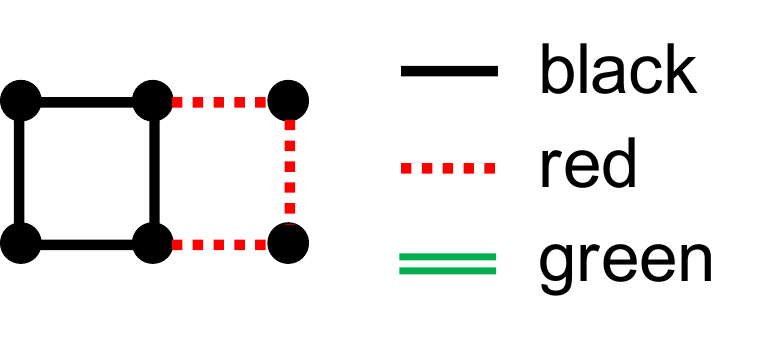}
        \caption{2-coloring of the domino graph. Each colored subgraph forms a unigraph.}
        \label{fig:domino_2color}
    \end{subfigure}
    \hfill
    \begin{subfigure}{.45\linewidth}
        \centering
        \includegraphics[bb=0 0 90 75, scale=0.5]{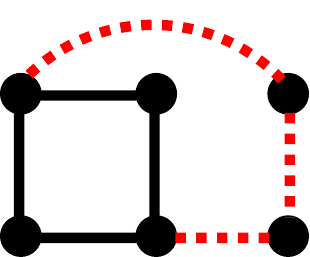}
        \caption{2-colored graph whose colored degree sequence is the same as the edge-colored graph in Figure~\ref{fig:domino_2color}.}
        \label{fig:domino_2color_not_iso}
    \end{subfigure}
    \hfill
    \begin{subfigure}{.45\linewidth}
        \centering
        \includegraphics[bb=0 0 89 53, scale=0.5]{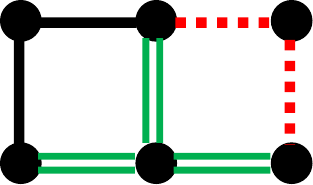}
        \caption{3-colored graph whose colored degree sequence determines the whole graph uniquely up to isomorphism.}
        \label{fig:domino_3color}
    \end{subfigure}
    \caption{Domino graph and its edge colorings.}
    \label{fig:domino_and_coloring}
\end{figure}

In this paper, we introduce two types of generalizations of unigraphs: \emph{$k$-unigraphs} and \emph{$k$-strong unigraphs}.
We say that a graph is a \emph{$k$-unigraph} if its edge set can be partitioned into $k$ sets such that the subgraph induced by each set is a unigraph.
We call the minimum $k$ satisfying the above condition the \emph{unigraph number} of the graph.
The formal definition is described in Section~\ref{sec:unigraphs}.
The unigraph number of a graph is a measure of how far the graph is from a unigraph.
For example, let us consider the domino graph, which is shown in Figure~\ref{fig:domino}.
The domino graph is not a unigraph because the graph in Figure~\ref{fig:domino_not_iso} has the same degree sequence $(3, 3, 2, 2, 2, 2)$ as the domino graph but is not isomorphic to the domino graph.
However, we can decompose the domino graph into two unigraphs as shown in Figure~\ref{fig:domino_2color}.
The black and the red subgraph have the degree sequences $(2, 2, 2, 2)$ and $(2, 2, 1, 1)$, respectively.
Both the subgraphs are unigraphs.
Therefore, the domino graph is a $2$-unigraph and its unigraph number is two.

However, the definition of $k$-unigraphs is somewhat \emph{weak}, that is, while a unigraph requires that its degree sequence determines the \emph{whole} graph uniquely, in a $k$-unigraph, only \emph{each subgraph} in a partition has to be uniquely determined by the degree sequence of the subgraph.
Therefore, we introduce another generalization of unigraphs: \emph{$k$-strong unigraphs}.
We say that a graph is a $k$-strong unigraph if not only each subgraph is a unigraph but also the whole graph can be uniquely determined up to isomorphism, by using the degree sequences of all the subgraphs in the partition.
Let us consider this in a more detailed way.
A partition of an edge set can be seen as an edge coloring.
A \emph{$k$-edge coloring} of a graph is an assignment of $k$ colors to the edges of the graph.
Note that $k$-edge colorings do not have to be proper, that is, it is allowed that two edges sharing a vertex are assigned the same color.
When we are given a graph and its $k$-edge coloring, a \emph{colored degree} of a vertex is a $k$-tuple, whose $j$-th element represents the number of edges in the $j$-th color.
We define the \emph{colored degree sequence} by the sequence of the colored degrees of all the vertices.
We say that a graph is a \emph{$k$-strong unigraph} if there exists a $k$-edge coloring such that not only each colored subgraph is a unigraph but also the colored degree sequence obtained by the coloring determines the whole graph uniquely up to isomorphism.
We call the minimum $k$ satisfying the above condition the \emph{strong unigraph number} of the graph.
The strong unigraph number of the graph is the measure of not only how far the graph is from a unigraph, but also how difficult it is to determine the graph uniquely up to isomorphism by degree sequences.
For example, the two edge-colored graphs in Figures~\ref{fig:domino_2color} and \ref{fig:domino_2color_not_iso} have the same colored degree sequence $((2, 1), (2, 1), (2, 0), (2, 0), (0, 2), (0, 2))$, where, for each tuple, the first and the second element mean the numbers of the black and the red edges incident to a vertex, respectively.
Therefore, the edge colorings do not determine the whole graphs uniquely.
On the other hand, Figure~\ref{fig:domino_3color} shows a 3-edge coloring of the domino graph.
The colored degree sequence obtained by the edge coloring is $((2, 0, 0), (1, 1, 1), (1, 0, 1), (0, 2, 0), (0, 1, 1), (0, 0, 3))$, where, for each tuple, the first, the second, and the third element mean the numbers of the black, the red, and the green edges, respectively.
In the edge coloring, not only each colored subgraph is a unigraph but also the whole graph is uniquely determined by the colored degree sequence.
Therefore, the domino graph is a $3$-strong unigraph.

\begin{figure}[t]
    \centering
    \includegraphics[bb=0 0 127 54, scale=0.5]{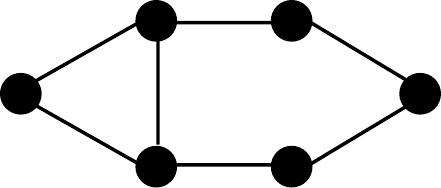}
    \caption{Graph whose degree sequence is the same as the domino graph but is not isomorphic to the domino graph.}
    \label{fig:domino_not_iso}
\end{figure}

The strong unigraph number is related to the well-known \emph{subgraph isomorphism problem}.
Especially, we focus on its variant, the \emph{isomorphic subgraph enumeration problem}.
In the problem, the input is a pair of graphs: a host graph and a query graph.
The task is to find all the subgraphs, of the host graph, such that they are isomorphic to the query graph.
Since the subgraph isomorphism problem is in NP-complete, enumerating isomorphic subgraphs is a hard task in general.
One approach to the problem is a technique using decision diagrams~\cite{kawahara2019colorful}.
In this approach, one first finds an edge coloring of the query graph such that the colored degree sequence obtained by the coloring determines the whole graph uniquely.
Then one searches for subgraphs, of the host graph, which yields the same colored degree sequence as the edge coloring of the query graph.
Using edge colorings, we only have to maintain colored degree sequences of subgraphs.
It leads to an efficient algorithm.
A notable point is that the complexity of the algorithm heavily depends on the number of colors used in the edge coloring.
Therefore, it is important to find an edge coloring that not only determines the query graph uniquely but also uses as few colors as possible.
However, in their study, they only gave edge colorings for a few special query graphs in an ad hoc way.
In order to deal with various query graphs efficiently, a theoretical investigation is essential.
This motivates us to define $k$-strong unigraphs.
We focus on the fact that graphs which need only one color to determine the whole graph uniquely are exactly unigraphs.
Then we generalize them by the number of necessary colors and define $k$-strong unigraphs.
The reason why we define $k$-unigraphs, along with $k$-strong unigraphs, is that they seem to be easier to be dealt with than $k$-strong unigraphs and they are of theoretical interest themselves.
In addition, the unigraph number of a graph can be used as a lower bound of the strong unigraph number.

Our contributions in this paper are:
\begin{itemize}
    \item We introduce two types of generalizations of unigraphs: \emph{$k$-unigraphs} and \emph{$k$-strong unigraphs}.
    \item We analyze a basic property of the (strong) unigraph number of a graph. We show that, for every graph, its (strong) unigraph number is at most its \emph{vertex cover number}.
    \item We show that, for every tree, its unigraph number is equal to its \emph{edge domination number}, the minimum number of edges such that every edge shares at least one endpoint with an adopted edge. Using this property, we show that we can calculate the unigraph number of a tree in linear time.
\end{itemize}

This paper is organized as follows.
The rest of this section describes related work.
Section~\ref{sec:preliminaries} gives preliminaries.
In Section~\ref{sec:unigraphs}, we introduce new generalizations of unigraphs: $k$-unigraphs and $k$-strong unigraphs.
Section~\ref{sec:star} presents a relationship between the (strong) unigraph number and the vertex cover number.
Section~\ref{sec:tree} shows the relationship between the unigraph number of a tree and the edge domination number and that we can calculate the unigraph number of a tree in linear time.
Section~\ref{sec:conclusions} gives concluding remarks.

\subsection*{Related work}\label{sec:related_work}
Unigraphs~\cite{johnson1975simple,li1975graphic} and its subclasses~\cite{chvatal1977aggregation,foldes1978class,hammer2004splitoids,peled1977matroidal} are well studied in the literature.
There is a linear-time algorithm to recognize unigraphs~\cite{kleitman1975note}.
Tyshkevich~\cite{tyshkevich2000decomposition} characterized unigraphs by the canonical decomposition\footnote{Tyshkevich decomposed a unigraph into several graphs. On the other hand, we decompose a graph (not necessarily a unigraph) into several unigraphs.}.
Borri et al.~\cite{borri2011recognition} presented a new linear-time algorithm for recognizing unigraphs by extending the pruning algorithm for recognizing matrogenic graphs~\cite{marchioro1984degree}.
Unigraphs are not hereditary, that is, an induced subgraph of a unigraph may not be a unigraph.
\emph{Hereditary unigraphs}, a subclass of unigraphs that are hereditary, are studied by Barrus~\cite{barrus2012switch,barrus2013hereditary}.

A subgraph enumeration technique using decision diagrams is well studied.
It has been used for many applications such as network reliability evaluation~\cite{hardy2007kterminal}, electrical distribution network~\cite{inoue2014distribution}, influence spread~\cite{maehara2017exact}, and so on.
A framework of algorithms to enumerate subgraphs using decision diagrams is called frontier-based search~\cite{sekine1995tutte}.
Kawahara et al.~\cite{kawahara2019colorful} have proposed generalized frontier-based search, colorful frontier-based search.
By their framework, many types of query graphs can be dealt with in a unified way.
However, the efficiency of the algorithm heavily depends on how to color the query graph.
There are few theoretical results on how many colors are needed to determine the query graph uniquely, which motivates us to define $k$-unigraphs and $k$-strong unigraphs.

\section{Preliminaries}\label{sec:preliminaries}
Let $G = \graph{V(G)}{E(G)}$ be a graph with $n = |V(G)|$ vertices and $m = |E(G)|$ edges.
We assume that $G$ is finite, undirected, connected, and simple (without multiple edges and loops).
We regard an edge $e$ as a 2-vertex set $\edge{u}{v}$, where $u$ and $v$ are the two distinct endpoints of $e$.
For a vertex $v \in V(G)$, the \emph{degree of} $v$ is $\degree{G}{v} = |\setST{e \in E(G)}{v \in e}|$.
For $U \subseteq V(G)$, the \emph{vertex-induced subgraph} by $U$ is $G[U] = \graph{U}{E'}$, where $E' = \setST{e \in E(G)}{e \subseteq U}$.
For $F \subseteq E(G)$, the \emph{edge-induced subgraph} by $F$ is $G[F] = \graph{V'}{F}$, where $V' = \bigcup_{e \in F} e$.
For two graphs $G$ and $H$, $G$ is \emph{isomorphic} to $H$ if there exists a bijection $\func{f}{V(G)}{V(H)}$ such that $\edge{u}{v} \in E(G)$ if and only if $\edge{f(u)}{f(v)} \in E(H)$.
When $G$ is isomorphic to $H$, we write $G \simeq H$ (or $H \simeq G$, since the isomorphism relation is symmetric).
For a positive integer $k$, we define $[k] = \set{1, 2, \dots, k}$.

For convenience, we define a \emph{degree set} instead of a degree sequence.
The \emph{degree set of $G$} is the multiset $\degreeSet{G} = \multisetST{\degree{G}{v}}{v \in V(G)}$\footnote{We use "$\multiset{}$" for multisets instead of "$\set{}$".}.
Using degree sets, we can define unigraphs as follows.

\begin{definition}[unigraph~\cite{johnson1975simple,li1975graphic}]
A graph $G$ is a \emph{unigraph} if, for all graphs $H$ whose degree sets are the same as $G$, $H$ is isomorphic to $G$.
\end{definition}

The following lemma shows there are few patterns for disconnected unigraphs.
Although the lemma is basic, an explicit statement could not be found in the literature.
Thus, we proof the lemma in Appendix~\ref{sec:proof_disconnected_unigraph}.

\begin{lemma}\label{le:disconnected_unigraph}
    If $G$ is a disconnected unigraph, $G$ has at most one connected component which has at least three vertices.
\end{lemma}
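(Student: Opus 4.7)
The plan is a proof by contradiction using a 2-switch argument, with the multiset of connected-component sizes as the isomorphism invariant that will distinguish the constructed graph from $G$. Suppose for contradiction that $G$ is a disconnected unigraph with two components $C_1,C_2$ such that $n_1 := |V(C_1)| \ge 3$ and $n_2 := |V(C_2)| \ge 3$. Pick an edge $e_1 = \edge{a}{b} \in E(C_1)$ and an edge $e_2 = \edge{c}{d} \in E(C_2)$. Because $C_1$ and $C_2$ are distinct components, each of the pairs $\edge{a}{c}, \edge{b}{d}, \edge{a}{d}, \edge{b}{c}$ is a non-edge of $G$, so either 2-switch (replacing $e_1, e_2$ by one of $\set{\edge{a}{c},\edge{b}{d}}$ or $\set{\edge{a}{d},\edge{b}{c}}$) yields a valid simple graph $H$ with exactly the same degree multiset as $G$. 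It then suffices to show that we can choose the 2-switch so that the multiset of component sizes of $H$ differs from that of $G$.

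I would split the analysis into two cases. \textbf{Case A:} Some edge of $C_1 \cup C_2$, say $e_1$, is not a bridge in its component; then $C_1 - e_1$ is still connected on $V(C_1)$, and after the switch the new edges $\edge{a}{c}, \edge{b}{d}$ glue $V(C_1) \cup V(C_2)$ into a single component of size $n_1 + n_2$, so $H$ has exactly one fewer component than $G$. \textbf{Case B:} Both $C_1$ and $C_2$ are trees (every edge is a bridge). Since $n_i \ge 3$, each tree has a pendant edge; choose $e_1, e_2$ to be pendant edges with leaves $a$ and $c$, so that $C_1 - e_1$ consists of the isolated vertex $a$ together with a tree on $n_1 - 1$ vertices, and similarly for $C_2 - e_2$. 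Performing the switch that adds $\edge{a}{c}$ and $\edge{b}{d}$ produces an isolated $K_2$ on $\set{a,c}$ together with one large component on the remaining $n_1 + n_2 - 2$ vertices of $V(C_1)\cup V(C_2)$.

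To conclude, I would observe that the component-size multiset is an isomorphism invariant, so it suffices to compare: in Case A, $H$ has one fewer component than $G$; in Case B, the sub-multiset $\multiset{n_1, n_2}$ of $G$ is replaced by $\multiset{2, n_1 + n_2 - 2}$ in $H$, and these differ because $n_1, n_2 \ge 3$ forces $2 \notin \multiset{n_1, n_2}$. In either case $H \not\simeq G$, contradicting the unigraph property.

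The main obstacle I anticipate is Case B, because it is not true that every 2-switch between two trees produces different component sizes (e.g.\ the ``wrong'' pairing $\edge{a}{d},\edge{b}{c}$ applied to two copies of $P_3$ recovers the size multiset $\multiset{3,3}$). The fix is precisely to commit to pairing the two leaves together, so that the isolated $K_2$ appears; verifying that the resulting sizes $\set{2, n_1+n_2-2}$ never coincide with $\set{n_1,n_2}$ under the hypothesis $n_1, n_2 \ge 3$ is then immediate.
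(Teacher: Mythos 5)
Your proof is correct and follows essentially the same route as the paper's appendix proof: a contradiction via a 2-switch between the two large components, splitting into the non-bridge case (components merge, so fewer components) and the two-trees case (swap pendant edges pairing the leaves to create a $K_2$). Your explicit use of the component-size multiset and the observation that $2 \notin \multiset{n_1, n_2}$ is in fact slightly more careful than the paper's terse ``$|V(C'_1)| = 2 < 3$'' justification, but the argument is the same.
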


By Lemma~\ref{le:disconnected_unigraph}, we restrict our attention to decomposing a graph into connected unigraphs in this paper.
The following lemma and corollary are basic characterizations of (connected) unigraphs.

\begin{lemma}[\cite{barrus2012switch}]
    A unigraph does not contain a path with four edges as a vertex-induced subgraph.
\end{lemma}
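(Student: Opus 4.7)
My plan is to prove the contrapositive: if $G$ contains a vertex-induced path $v_1 v_2 v_3 v_4 v_5$ (with four edges), then $G$ is not a unigraph. The main tool is a single $2$-switch. Because the path is induced, $\{v_1, v_5\}$ and $\{v_2, v_4\}$ are non-edges of $G$, while $\{v_1, v_2\}$ and $\{v_4, v_5\}$ are edges. First, I would construct $G'$ by deleting the latter two edges and inserting the former two. Every vertex of $G$ either loses exactly as many edges as it gains or keeps the same set of incident edges, so $G$ and $G'$ share the same degree set.

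The remaining task is to show that $G \not\simeq G'$, contradicting the assumption that $G$ is a unigraph. Since the $2$-switch only modifies edges inside $\{v_1, \ldots, v_5\}$, the induced subgraph of $G$ on the other vertices and every edge joining some $v_i$ to a vertex outside are identical in $G$ and $G'$; within the five distinguished vertices, however, $G$ induces the path $v_1 v_2 v_3 v_4 v_5$ whereas $G'$ induces the triangle $v_2 v_3 v_4$ together with the isolated edge $\{v_1, v_5\}$. The crucial local observation is that $v_3$ has the same neighbors in $G$ and in $G'$, yet the pair $\{v_2, v_4\}$ lying inside that neighborhood is a non-edge in $G$ and an edge in $G'$; thus the number of triangles through $v_3$ strictly increases by one. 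I would combine this with a careful accounting of the triangle counts at $v_1, v_2, v_4, v_5$ and at each vertex outside $\{v_1, \ldots, v_5\}$ (whose triangle counts can shift only via adjacencies to the four modified vertices) to conclude that the multiset of pairs (vertex degree, number of incident triangles) cannot be preserved by any isomorphism.

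The main obstacle will be exactly this last accounting: although the $2$-switch is highly localized, triangle counts can shift at several vertices at once, so the argument must rule out compensating decreases elsewhere that might cancel the net gain at $v_3$. A cleaner, if heavier, alternative would be to appeal to Tyshkevich's canonical decomposition of unigraphs~\cite{tyshkevich2000decomposition} and verify that none of the indecomposable building blocks admits an induced path with four edges; I prefer the $2$-switch route because it is self-contained.
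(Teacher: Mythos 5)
The paper itself gives no proof of this lemma (it simply cites Barrus~\cite{barrus2012switch}), so the only question is whether your argument stands on its own, and it does not: the step you flag as "the main obstacle" is not a technical accounting issue but the place where the proof actually fails. Your specific 2-switch (delete $\{v_1,v_2\},\{v_4,v_5\}$, add $\{v_1,v_5\},\{v_2,v_4\}$) does preserve the degree set, but it does \emph{not} always change the isomorphism class, and the multiset of pairs (degree, number of incident triangles) need not change either. Concretely, let $G$ be the $6$-vertex tadpole: a triangle $w v_1 v_2$ with the path $v_2 v_3 v_4 v_5$ attached at $v_2$. Then $v_1 v_2 v_3 v_4 v_5$ is a vertex-induced path with four edges, and your switch produces the triangle $v_2 v_3 v_4$ with the path $v_2\, w\, v_1\, v_5$ attached at $v_2$ --- again a triangle with a three-edge tail, hence $G' \simeq G$. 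The extra triangle gained at $v_3$ is exactly compensated by the triangle lost at $w$, so your invariant is unchanged (necessarily so, since the graphs are isomorphic), and no contradiction is reached even though this $G$ is in fact not a unigraph (its degree set $\multiset{3,2,2,2,2,1}$ is also realized by a $5$-cycle with a pendant vertex).

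So the lemma is true, but proving it by a single fixed 2-switch applied to the induced path cannot work; one must either choose the 2-switch adaptively, depending on how the five path vertices attach to the rest of the graph (this case analysis is essentially what Barrus's 2-switch machinery does: a graph is a unigraph exactly when no 2-switch changes its isomorphism class, and one then shows an induced four-edge path always admits \emph{some} isomorphism-breaking switch), or exhibit a different realization of the degree set directly, or fall back on structural results such as Tyshkevich's canonical decomposition, which is the heavier alternative you mention. As written, the central non-isomorphism claim is false in general, so the proposal has a genuine gap.
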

\begin{corollary}\label{cor:diameter}
    If $G$ is a connected unigraph, its diameter is at most three.
\end{corollary}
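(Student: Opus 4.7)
The plan is to proceed by contrapositive: assume $G$ is connected with diameter at least $4$ and exhibit a vertex-induced path with four edges, which by the preceding lemma rules out $G$ being a unigraph.

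First I would pick two vertices $u, v \in V(G)$ whose distance equals the diameter of $G$, so that the distance between them is at least $4$. Since $G$ is connected, I can take a shortest $u$--$v$ path, and truncate it to its first five vertices $v_1, v_2, v_3, v_4, v_5$ (with $v_1 = u$). By construction, this is a shortest path between $v_1$ and $v_5$, so the distance between $v_1$ and $v_5$ in $G$ is exactly $4$.

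Next I would argue that the vertex-induced subgraph $G[\{v_1, v_2, v_3, v_4, v_5\}]$ is precisely the path with four edges. The path edges $\edge{v_i}{v_{i+1}}$ are present for $i \in [4]$. I need to rule out every other potential edge $\edge{v_i}{v_j}$ with $|i - j| \geq 2$. If any such chord existed, then concatenating the appropriate portions of the path with that chord would yield a $v_1$--$v_5$ walk of length at most $3$, contradicting that the distance between $v_1$ and $v_5$ is $4$. Hence no chord exists and the induced subgraph is exactly a $P_5$.

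Finally, applying the preceding lemma, $G$ cannot be a unigraph, completing the contrapositive. No step here looks like a real obstacle; the main care point is simply checking in the chord analysis that every possible shortcut truly produces a walk of length at most $3$ between $v_1$ and $v_5$, but this is a routine case check over the pairs $(i,j)$ with $|i-j| \geq 2$.
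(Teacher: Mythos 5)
Your argument is correct and is exactly the (implicit) derivation the paper intends: the corollary is stated as an immediate consequence of the lemma that a unigraph has no vertex-induced path with four edges, and your contrapositive construction of an induced $P_5$ from a geodesic of length four is the standard way to make that implication explicit. The chord check indeed goes through in every case, since any chord among the five consecutive geodesic vertices yields a $v_1$--$v_5$ walk of length at most three.
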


\section{Two generalizations of unigraphs}\label{sec:unigraphs}
In this paper, we introduce two types of generalizations of unigraphs: \emph{$k$-unigraphs} and \emph{$k$-strong unigraphs}.
Roughly speaking, we say that a graph $G$ is a \emph{$k$-unigraph} if $G$ can be partitioned into $k$ unigraphs.
$G$ is a \emph{$k$-strong unigraph} if not only each subgraph is a unigraph but also the whole graph can be uniquely determined up to isomorphism, by using the degree sequences of all the subgraphs.

In order to formally define $k$-unigraphs and $k$-strong unigraphs, we introduce \emph{$k$-edge colorings} and \emph{colored degrees}.
A \emph{$k$-edge coloring} of $G$ is a function $\func{c}{E(G)}{[k]}$.
It assigns each edge in $G$ a color $i \in [k]$.
Note that $k$-edge colorings do not have to be proper, that is, it is allowed that two edges sharing a vertex are assigned the same color.
The pair of a graph $G$ and an edge coloring $c$ is an \emph{edge-colored graph}, which we denote by $G^c$.
If $c(e) = i \in [k]$ holds for an edge $e$, we say that \emph{the color of $e$ is $i$} or that \emph{$e$ has the color $i$}.
For each color $i \in [k]$, let $E_i(G^c)$ be the set of edges in the color $i$, that is, $E_i(G^c) = \setST{e \in E(G)}{c(e) = i}$.
We call $G^c_i = G[E_i(G^c)]$ the \emph{color-$i$ subgraph of $G^c$}.
For a color $i \in [k]$ and a vertex $v \in V(G)$, we define the \emph{color-$i$ degree of $v$} by the number of the color-$i$ edges incident to $v$ and denote it by $\degreeOfColor{G^c}{v}{i} = |\setST{e \in E_i(G)}{v \in e}|$.
We define the \emph{colored degree of $v \in V(G)$} by $\coloredDegree{G^c}{v} = \sequence{\degreeOfColor{G^c}{v}{1}, \degreeOfColor{G^c}{v}{2}, \dots, \degreeOfColor{G^c}{v}{k}}$.
Finally, we define the \emph{colored degree set of $G^c$} by the multiset  $\coloredDegreeSet{G^c} = \multisetST{\coloredDegree{G^c}{v}}{v \in V(G)}$.

Now we define \emph{$k$-(strong) unigraphs} using \emph{$k$-(strongly) unigraphic (edge) colorings}, which we define in the following.
By Lemma~\ref{le:disconnected_unigraph}, which states there are few patterns for disconnected unigraphs, we restrict our attention to decomposing a graph into connected unigraphs.

\begin{definition}[$k$-(strongly) unigraphic coloring]
    Let $\func{c}{E(G)}{[k]}$ be a $k$-edge coloring.
    \begin{itemize}
        \item The $k$-edge coloring $c$ is a \emph{$k$-unigraphic (edge) coloring} if, for all $i \in [k]$, the color-$i$ subgraph of $G^c$ is a connected unigraph.
        \item The $k$-edge coloring $c$ is a \emph{$k$-strongly unigraphic (edge) coloring} if it is a $k$-unigraphic coloring and, for all graphs $H = \graph{V(H)}{E(H)}$, the following holds:
        \begin{equation}
            \exists \func{c'}{E(H)}{[k]}, \coloredDegreeSet{G^c} = \coloredDegreeSet{H^{c'}} \Rightarrow G \simeq H.
        \end{equation}
    \end{itemize}
\end{definition}

\begin{definition}[$k$-(strong) unigraph]
    $G$ is a \emph{$k$-(strong) unigraph} if there exists a $k$-(strongly) unigraphic coloring of $G$.
\end{definition}

We define the \emph{unigraph number of $G$} as the minimum $k$ such that $G$ is a $k$-unigraph and denote it by $w(G)$.
We also define the \emph{strong unigraph number} of $G$ similarly and denote it by $s(G)$.

\section{Upper bound of the (strong) unigraph numbers for general graphs}\label{sec:star}
In this section, we show that, for every graph, both its unigraph number and strong unigraph number are at most its \emph{vertex cover number}.
A set $U \subseteq V(G)$ is called a \emph{vertex cover} of $G$ if, for all $e \in E(G)$, $e \cap U \neq \emptyset$ holds.
A vertex cover of $G$ with the minimum size is called a \emph{minimum vertex cover} and its size is the vertex cover number, which we denote by $\vertexCoverNumber{G}$.

To prove the bound, we utilize a \emph{$k$-star coloring}, a $k$-edge coloring whose each colored subgraph forms a star.
We show that a $k$-star coloring is a $k$-(strongly) unigraphic coloring and that we can construct a $\vertexCoverNumber{G}$-star coloring for any graph $G$, which means that both $\UnigraphNumber{G}$ and $\strongUnigraphNumber{G}$ are at most $\vertexCoverNumber{G}$.

\begin{definition}[star coloring]
    A $k$-edge coloring $\func{c}{E(G)}{[k]}$ is a \emph{$k$-star coloring} if, for all $i \in [k]$, the color-$i$ subgraph of $G^c$ is a star.
\end{definition}
\begin{lemma}\label{le:star_weak}
    Every $k$-star coloring is a $k$-unigraphic coloring.
\end{lemma}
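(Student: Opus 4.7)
The plan is to reduce the statement to the single folklore fact that a star is a unigraph. Granted that, the rest is immediate: in a $k$-star coloring each color-$i$ subgraph $G^c_i$ is by definition a star, which is both connected (being a tree with one universal center) and, by the folklore fact, a unigraph, so the coloring matches the definition of a $k$-unigraphic coloring on the nose.

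To verify the folklore fact, I would fix the degree multiset of $\starGraph{n}$: a single value $n$ (for the center) together with $n$ copies of $1$ (for the leaves). For an arbitrary graph $H$ sharing this multiset, I pick the unique vertex $v$ of degree $n$ in $H$. Its $n$ incident edges must reach $n$ distinct vertices, each of which must be one of the $n$ remaining degree-$1$ vertices; each such leaf then has its degree entirely consumed by the edge to $v$. Since the degree sum already equals $2n$, there is no room for any further edge, so $H \simeq \starGraph{n}$. The boundary case $n = 1$ (where $\starGraph{1} = K_2$) is covered by exactly the same counting.

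I do not expect any real obstacle here; it is a direct degree-count on a very constrained multiset. The only point worth noting is that in a $k$-star coloring every color actually occurs on at least one edge, for otherwise the corresponding color-$i$ subgraph would be edgeless rather than a star; hence the above argument applies uniformly to every $i \in [k]$ and no degenerate empty-subgraph case needs separate handling.
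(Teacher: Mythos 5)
Your proof is correct and follows essentially the same route as the paper, which simply cites the fact that a star is a unigraph; you additionally verify that folklore fact by a degree count and note connectivity and the non-empty-color point, all of which are fine but not a different argument.
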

\begin{proof}
    This is because a star is a unigraph. \qed
\end{proof}
\begin{corollary}\label{co:star}
    Let $\func{c}{E(G)}{[k]}$ be a $k$-star coloring of $G$.
    For all graphs $H$ and a $k$-edge coloring $\func{c'}{E(H)}{[k]}$, $c'$ is a $k$-star coloring if $D(G^c) = D(H^{c'})$.
\end{corollary}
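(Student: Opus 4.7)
The plan is to analyze the corollary one color at a time. Fix an arbitrary color $i \in [k]$, and observe that the hypothesis $\coloredDegreeSet{G^c} = \coloredDegreeSet{H^{c'}}$ implies the two multisets of color-$i$ degrees (the $i$-th projections of the colored degree tuples) coincide. In particular, this gives equality of the number of color-$i$ edges, $n_i := |E_i(G^c)| = |E_i(H^{c'})|$, since each is one-half the sum of the color-$i$ degrees. It then suffices to show that the color-$i$ subgraph $H^{c'}_i$ must be a star, from which Lemma~\ref{le:star_weak} (via the definition of a $k$-star coloring being satisfied for every color) yields the corollary.

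Next, I would record what the color-$i$ degree multiset of a star looks like. If $G^c_i$ is a star with $n_i \geq 2$ edges, then one vertex (the center) has color-$i$ degree exactly $n_i$, exactly $n_i$ vertices (the leaves) have color-$i$ degree $1$, and the remaining vertices have color-$i$ degree $0$. In the degenerate cases $n_i = 1$ and $n_i = 0$, the color-$i$ degree multiset consists respectively of two $1$'s (with the rest $0$) and of all $0$'s. By the equality of color-$i$ degree multisets, the same description applies to $H^{c'}_i$.

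Finally, for the main case $n_i \geq 2$, let $u \in V(H)$ be the (unique) vertex with $\degreeOfColor{H^{c'}}{u}{i} = n_i$. Then $u$ is incident to $n_i$ color-$i$ edges, but $H^{c'}_i$ has only $n_i$ edges in total, so every color-$i$ edge of $H$ is incident to $u$; hence $H^{c'}_i$ is a star centered at $u$. The case $n_i = 1$ is immediate since the two color-$i$-degree-$1$ vertices must be joined by the single color-$i$ edge, giving a star of one edge; the case $n_i = 0$ is vacuous. Since this holds for every $i \in [k]$, $c'$ is a $k$-star coloring.

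The only mild obstacle I foresee is a notational one, namely pinning down whether the definition of a star accommodates the degenerate cases of a single edge and the empty subgraph; once that is settled consistently with the paper's convention, the argument is essentially a one-line counting step per color and does not require any further structural result about unigraphs.
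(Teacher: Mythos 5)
Your argument is correct, but it takes a more elementary route than the paper intends. The paper states this as an immediate corollary of the fact that a star is a unigraph (the content of Lemma~\ref{le:star_weak}): from $D(G^c) = D(H^{c'})$ one gets, for each color $i$, equality of the degree multisets of the edge-induced subgraphs $G^c_i$ and $H^{c'}_i$ (the zero entries drop out), and since $G^c_i$ is a star, hence a unigraph, $H^{c'}_i$ must be isomorphic to it and is therefore itself a star. You instead re-derive this uniqueness by hand: you project the colored degree multiset to coordinate $i$, recover $|E_i(H^{c'})| = |E_i(G^c)| = n_i$ from the degree sum, and observe that the vertex of color-$i$ degree $n_i$ (for $n_i \geq 2$) absorbs all $n_i$ color-$i$ edges, forcing $H^{c'}_i \simeq K_{1,n_i}$, with the $n_i = 1$ case giving $K_2$. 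This is in effect a proof, specialized to this context, that star degree sequences have unique realizations; it is self-contained and avoids citing unigraphicity of stars, at the cost of a little extra counting. Two small remarks: your appeal to Lemma~\ref{le:star_weak} is unnecessary (the corollary asserts only that $c'$ is a star coloring, not that it is unigraphic), and the $n_i = 0$ case you worry about does not arise, since a $k$-star coloring of $G$ forces every color class of $G^c$, and hence (by the equal degree multisets) of $H^{c'}$, to be nonempty.
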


\begin{figure}[t]
    \centering
    \begin{subfigure}{0.45\linewidth}
        \centering
        \includegraphics[bb=0 0 146 126, scale=0.5]{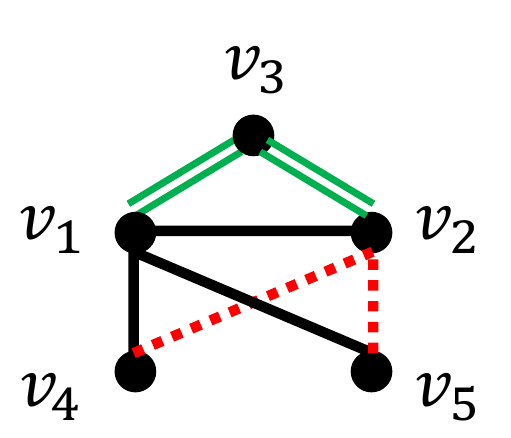}
        \caption{3-star coloring whose stars have centers $v_1, v_2$, and $v_3$.}
        \label{fig:star_coloring}
    \end{subfigure}
    \hfill
    \begin{subfigure}{0.45\linewidth}
        \centering
        \includegraphics[bb=0 0 147 126, scale=0.5]{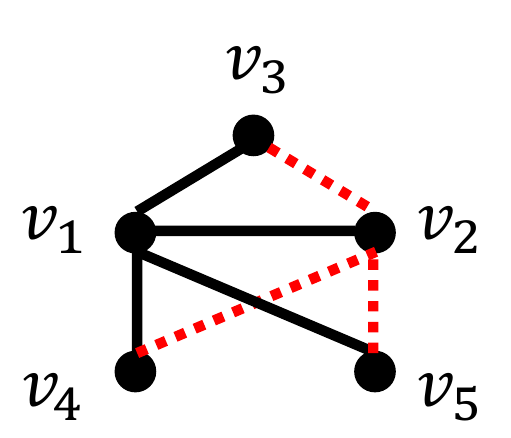}
        \caption{2-star coloring whose stars have centers $v_1$ and $v_2$. The two vertices form a minimum vertex cover.}
        \label{fig:star_coloring_vertex_cover}
    \end{subfigure}
    \caption{Star colorings with different numbers of colors for the same graph.}
    \label{fig:star_colorings}
\end{figure}

Figure~\ref{fig:star_colorings} shows examples of star colorings.
In a star coloring, roughly speaking, whether two vertices are adjacent depends only on their colored degrees.
Formally, this can be stated as the following lemma.

\begin{lemma}\label{le:star}
    Let $\func{c}{E(G)}{[k]}$ be a $k$-star coloring.
    $\edge{u}{v} \in E(G)$ if and only if there exists a color $i \in [k]$ such that
    \begin{enumerate}
        \item $|E_i(G^c)| = 1$ and $\degreeOfColor{G^c}{u}{i} = \degreeOfColor{G^c}{v}{i} = 1$, or
        \item $|E_i(G^c)| \geq 2$, $\min_{x \in \{u, v\}} \degreeOfColor{G^c}{x}{i} = 1$ and $\max_{x \in \{u, v\}} \degreeOfColor{G^c}{x}{i} \geq 2$.
    \end{enumerate}
\end{lemma}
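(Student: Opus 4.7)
My plan is to prove the \emph{iff} by a straightforward case analysis driven by the structure of a star: any star $G^c_i$ is either a single edge $K_2$ (in which case both endpoints have color-$i$ degree $1$ and all other vertices have color-$i$ degree $0$) or has at least two edges (in which case there is a unique center of color-$i$ degree $\geq 2$ and all remaining color-$i$ neighbors are leaves of color-$i$ degree $1$). The two cases in the statement correspond exactly to these two structural possibilities.

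For the forward direction, I would suppose $\edge{u}{v} \in E(G)$ and let $i = c(\edge{u}{v})$. If $|E_i(G^c)| = 1$ then $\edge{u}{v}$ is the sole color-$i$ edge, so $\degreeOfColor{G^c}{u}{i} = \degreeOfColor{G^c}{v}{i} = 1$, giving condition (1). Otherwise $|E_i(G^c)| \geq 2$, and since $G^c_i$ is a star, one of $u,v$ is its center (color-$i$ degree $\geq 2$) and the other is a leaf (color-$i$ degree $1$); this yields condition (2).

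For the backward direction, assume some color $i$ satisfies (1) or (2). In case (1), only the two endpoints of the unique color-$i$ edge have positive color-$i$ degree, so the condition $\degreeOfColor{G^c}{u}{i} = \degreeOfColor{G^c}{v}{i} = 1$ forces $u,v$ to be exactly those endpoints, hence $\edge{u}{v} \in E_i(G^c) \subseteq E(G)$. In case (2), the star $G^c_i$ has a unique vertex of color-$i$ degree $\geq 2$, namely its center, and every other vertex of positive color-$i$ degree is a leaf adjacent to that center; the assumption that $\max_{x \in \{u,v\}} \degreeOfColor{G^c}{x}{i} \geq 2$ identifies one of $u,v$ (say $u$) as the center, and $\min_{x \in \{u,v\}} \degreeOfColor{G^c}{x}{i} = 1$ makes the other ($v$) a leaf of the star, which must therefore be adjacent to $u$, so again $\edge{u}{v} \in E(G)$.

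I do not expect a serious obstacle; the only care needed is to verify in case (2) that a vertex with color-$i$ degree $\geq 2$ can only be the (unique) center of the star $G^c_i$ and that every leaf of this star is adjacent in $G$ to this center. Both facts are immediate from the definition of a star with at least two edges. The proof is essentially a structural unfolding of the star coloring hypothesis, with no nontrivial calculations required.
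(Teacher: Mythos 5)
Your proposal is correct and follows essentially the same route as the paper's proof: in the forward direction, take $i = c(\edge{u}{v})$ and split on $|E_i(G^c)| = 1$ versus $|E_i(G^c)| \geq 2$, and in the backward direction use the structure of $K_2$ or $K_{1,p}$ to identify $u$ and $v$ as the endpoints, or as the center and a leaf, hence adjacent. No gaps; your handling of case (2) in the backward direction is if anything slightly more explicit than the paper's.
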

\begin{proof}
    First, we show the ``only if'' part.
    Let $\func{c}{E(G)}{[k]}$ be a $k$-star coloring.
    If $\edge{u}{v} \in E(G)$, let $i = c(\{u, v\})$.
    If $|E_i(G^c)| = 1$, $G^c_i$ is isomorphic to $K_{2}$. In this case, both $u$ and $v$ are the endpoints of the only edge of the $K_2$. Therefore, $\degreeOfColor{G^c}{u}{i} = \degreeOfColor{G^c}{v}{i} = 1$ holds.
    If $|E_i(G^c)| \geq 2$, since $c$ is a star coloring, $G^c_i$ is isomorphic to the star $K_{1, p}$ for some integer $p \geq 2$.
    In this case, one of $u$ and $v$ is the center of the star, and the other is a leaf of the star.
    Therefore, $\min_{x \in \{u, v\}} \degreeOfColor{G^c}{x}{i} = 1$ and $\max_{x \in \{u, v\}} \degreeOfColor{G^c}{x}{i} \geq 2$ hold.

    Next, we show the ``if'' part.
    Assume that there exists a color $i \in [k]$ such that $|E_i(G^c)| = 1$ and $\degreeOfColor{G^c}{u}{i} = \degreeOfColor{G^c}{v}{i} = 1$.
    $|E_i(G^c)| = 1$ implies that $G^c_i \simeq K_2$.
    In addition, by $\degreeOfColor{G^c}{u}{i} = \degreeOfColor{G^c}{v}{i} = 1$, both $u$ and $v$ are the endpoints of the $K_2$. Therefore, $\edge{u}{v} \in E(G)$ holds.
    Assume that there exists a color $i \in [k]$ such that $|E_i(G^c)| \geq 2$, $\min_{x \in \{u, v\}} \degreeOfColor{G^c}{x}{i} = 1$ and $\max_{x \in \{u, v\}} \degreeOfColor{G^c}{x}{i} \geq 2$.
    Since $c$ is a star coloring, $|E_i(G^c)| \geq 2$ implies that $G^c_i \simeq K_{1, p}$ for some integer $p \geq 2$.
    Therefore, $\min_{x \in \{u, v\}} \degreeOfColor{G^c}{x}{i} = 1$ and $\max_{x \in \{u, v\}} \degreeOfColor{G^c}{x}{i} \geq 2$ implies that one of $u$ and $v$ is the center of the star and the other is a leaf, meaning that $\edge{u}{v} \in E(G)$. \qed
\end{proof}

\begin{lemma}\label{le:star_strong}
    Every $k$-star coloring is a $k$-strongly unigraphic coloring.
\end{lemma}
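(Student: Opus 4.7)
The plan is to combine Lemma~\ref{le:star_weak}, Corollary~\ref{co:star}, and Lemma~\ref{le:star}: the first gives the unigraphic condition for free, the second tells us that any graph $H$ matching $G^c$ in colored degrees must itself be star-colored, and the third characterizes adjacency purely in terms of colored degrees and color-class sizes. Since both of those quantities are preserved under any bijection realizing the equality of colored degree sets, such a bijection will be forced to be a graph isomorphism.

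Concretely, suppose $H = \graph{V(H)}{E(H)}$ admits a $k$-edge coloring $\func{c'}{E(H)}{[k]}$ with $\coloredDegreeSet{G^c} = \coloredDegreeSet{H^{c'}}$. First I would pick a bijection $\func{f}{V(G)}{V(H)}$ that witnesses this equality of multisets, i.e., $\coloredDegree{G^c}{v} = \coloredDegree{H^{c'}}{f(v)}$ for every $v \in V(G)$. By Corollary~\ref{co:star}, $c'$ is itself a $k$-star coloring, so Lemma~\ref{le:star} applies to both $G^c$ and $H^{c'}$. Next I would observe that for each color $i \in [k]$,
\begin{equation}
    2 |E_i(G^c)| = \sum_{v \in V(G)} \degreeOfColor{G^c}{v}{i} = \sum_{w \in V(H)} \degreeOfColor{H^{c'}}{w}{i} = 2 |E_i(H^{c'})|,
\end{equation}
since the colored degree multisets coincide; hence the color-class sizes in the two colorings match as well.

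Now I would verify that $f$ is an isomorphism. Fix $u, v \in V(G)$. By Lemma~\ref{le:star}, $\edge{u}{v} \in E(G)$ if and only if there exists $i \in [k]$ satisfying one of conditions~(1)--(2) of that lemma, and those conditions depend only on $|E_i(G^c)|$, $\degreeOfColor{G^c}{u}{i}$, and $\degreeOfColor{G^c}{v}{i}$. All three quantities are preserved when we replace $(G^c, u, v)$ by $(H^{c'}, f(u), f(v))$, so applying Lemma~\ref{le:star} again in $H^{c'}$ gives $\edge{u}{v} \in E(G) \Leftrightarrow \edge{f(u)}{f(v)} \in E(H)$, i.e., $G \simeq H$.

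There is no real obstacle here; the only thing one has to be careful about is to invoke Corollary~\ref{co:star} before using Lemma~\ref{le:star} on the $H$-side, since Lemma~\ref{le:star} is only stated for star colorings. Once that is in place, the proof is essentially a transport of Lemma~\ref{le:star} along the degree-preserving bijection $f$.
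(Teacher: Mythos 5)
Your proposal is correct and follows essentially the same route as the paper's own proof: fix a colored-degree-preserving bijection $f$, invoke Corollary~\ref{co:star} to see $c'$ is a star coloring, and transport the adjacency characterization of Lemma~\ref{le:star} along $f$ (your handshake-lemma argument for $|E_i(G^c)| = |E_i(H^{c'})|$ just makes explicit a step the paper asserts directly from $\coloredDegreeSet{G^c} = \coloredDegreeSet{H^{c'}}$).
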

\begin{proof}
    Let $\func{c}{E(G)}{[k]}$ be a $k$-star coloring of $G$,
    $H$ be a graph, and $\func{c'}{E(H)}{[k]}$ be a $k$-edge coloring of $H$ such that $D(G^c) = D(H^{c'})$.
    We show that $G \simeq H$, that is, there exists a bijection $\func{g}{V(G)}{V(H)}$ such that, for all $u, v \in V(G)$,  $\edge{u}{v} \in E(G)$ if and only if $\edge{g(u)}{g(v)} \in E(H)$.
    Consider a bijection which preserves the colored degrees of the vertices, that is, $\func{f}{V(G)}{V(H)}$ such that $f(a) = b \Rightarrow \coloredDegree{G^c}{a} = \coloredDegree{H^{c'}}{b}$, where $a \in V(G)$ and $b \in V(H)$.
    There exists such a bijection because $\coloredDegreeSet{G^c} = \coloredDegreeSet{H^{c'}}$.
    We show that, for all $u, v \in V(G)$, $\edge{u}{v} \in E(G) \Leftrightarrow \edge{f(u)}{f(v)} \in E(H)$ holds.

    Let $u, v \in V(G)$.
    If $\edge{u}{v} \in E(G)$, let $i = c(\edge{u}{v})$.
    We consider two cases based on Lemma~\ref{le:star}.
    If $|E_i(G^c)| = 1$, $\degreeOfColor{G^c}{u}{i} = \degreeOfColor{G^c}{v}{i} = 1$ holds.
    Since $f$ preserves the colored degrees of the vertices, $\degreeOfColor{H^{c'}}{f(u)}{i} = \degreeOfColor{H^{c'}}{f(v)}{i} = 1$ holds.
    In addition, $|E_i(H^{c'})| = 1$ because $D(G^c) = D(H^{c'})$.
    Moreover, by Corollary~\ref{co:star}, $c'$ is a $k$-star coloring.
    Therefore, by the first case of Lemma~\ref{le:star}, $\edge{f(u)}{f(v)} \in E(H)$ holds. Let us consider the second case, where $\edge{u}{v} \in E(G)$ and $|E_i(G^c)| \geq 2$.
    From the second case of Lemma~\ref{co:star}, without loss of generality, we assume that $\degreeOfColor{G^c}{u}{i} \geq 2$ and $\degreeOfColor{G^c}{v}{i} = 1$.
    Since $f$ preserves the colored degrees of the vertices, $\degreeOfColor{H^{c'}}{f(u)}{i} \geq 2$ and $\degreeOfColor{H^{c'}}{f(v)}{i} = 1$ hold.
    In addition, $|E_i(H^{c'})| \geq 2$ and $c'$ is a $k$-star coloring.
    Therefore, by the second case of Lemma~\ref{le:star}, $\edge{f(u)}{f(v)} \in E(H)$ holds.

    If $\edge{u}{v} \notin E(G)$, since $c$ is a star coloring, neither the first nor the second conditions in Lemma~\ref{le:star} hold.
    Since $f$ preserves the colored degrees of the vertices, the same goes to $H^{c'}$.
    Therefore, by a similar discussion to the above, $\edge{u}{v} \notin E(H)$ holds.

    From the above discussion, $\edge{u}{v} \in E(G) \Leftrightarrow \edge{f(u)}{f(v)} \in E(H)$ holds. \qed
\end{proof}

Now we show that, for every graph $G$, the (strong) unigraph number is at most the vertex cover number, by constructing $\vertexCoverNumber{G}$-(strongly) unigraphic coloring of $G$.

\begin{theorem}
    $\UnigraphNumber{G} \leq \strongUnigraphNumber{G} \leq \vertexCoverNumber{G}$.
\end{theorem}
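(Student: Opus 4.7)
The plan is to reduce both inequalities to what has already been established. The first inequality $\UnigraphNumber{G} \leq \strongUnigraphNumber{G}$ is essentially free: by definition, every $k$-strongly unigraphic coloring is, in particular, a $k$-unigraphic coloring, so any witness for $\strongUnigraphNumber{G}$ is also a witness for $\UnigraphNumber{G}$, giving $\UnigraphNumber{G} \leq \strongUnigraphNumber{G}$.

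For the second inequality $\strongUnigraphNumber{G} \leq \vertexCoverNumber{G}$, the idea is to exhibit a $\vertexCoverNumber{G}$-star coloring and invoke Lemma~\ref{le:star_strong}. Let $U = \{u_1, u_2, \dots, u_{\vertexCoverNumber{G}}\}$ be a minimum vertex cover of $G$. Define $\func{c}{E(G)}{[\vertexCoverNumber{G}]}$ by setting $c(e)$ to be the smallest index $i$ such that $u_i \in e$; this is well defined because every edge meets $U$. Each color class $E_i(G^c)$ consists solely of edges incident to $u_i$, so the color-$i$ subgraph is an edge-induced subgraph of the set of edges at $u_i$, which is either $\clique{2}$ or $K_{1, p}$ for some $p \geq 2$, i.e., a star.

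The one thing to verify carefully is that every color is actually used, so that no color class is empty. Here I would use the minimality of $U$: for every $u_i \in U$, if $U \setminus \{u_i\}$ were still a vertex cover, $U$ would not be minimum, so there must exist an edge $e_i \in E(G)$ with $e_i \cap U = \{u_i\}$. Such an edge necessarily receives color $i$ under the rule above, which guarantees each color is used. Hence $c$ is a $\vertexCoverNumber{G}$-star coloring, and by Lemma~\ref{le:star_strong} it is a $\vertexCoverNumber{G}$-strongly unigraphic coloring, yielding $\strongUnigraphNumber{G} \leq \vertexCoverNumber{G}$.

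The main (minor) obstacle is precisely the non-emptiness of each color class, since the definition of a star coloring implicitly requires each color-$i$ subgraph to be a nontrivial star; everything else follows directly from the structure of a vertex cover and the lemmas already proved.
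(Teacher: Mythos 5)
Your proof is correct and follows essentially the same route as the paper: color each edge by (the index of) a vertex of a minimum vertex cover it contains, observe that every color class is a star, and invoke Lemma~\ref{le:star_strong}, with the first inequality being immediate from the definitions. Your extra check that every color class is non-empty (via minimality of the cover) is a minor point the paper glosses over, but it does not change the argument.
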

\begin{proof}
    $\UnigraphNumber{G} \leq \strongUnigraphNumber{G}$ follows from the definition.
    We show that $\strongUnigraphNumber{G} \leq \vertexCoverNumber{G}$ holds.
    Let $k$ be $\vertexCoverNumber{G}$ and
    $S = \set{v_1, \dots, v_k} \subseteq V(G)$ be a minimum vertex cover of $G$.
    We subscript the other vertices by $V \setminus S = \set{v_{k+1}, \dots, v_{n}}$.
    Let us color each edge $\edge{v_i}{v_j} \in E(G)$ by the color $\min \set{i, j}$.
    Observe that each edge is colored by some color in $[k]$ because $S$ is a vertex cover, that is, for each edge $\edge{v_i}{v_j}$, $v_i \in S$ or $v_j \in S$ holds.
    In addition, for each color $i \in [k]$, the color-$i$ subgraph forms a star by construction.
    Therefore, the obtained edge coloring is a $k$-star coloring.
    By Lemma~\ref{le:star_strong}, it is a $k$-strongly unigraphic coloring.
    This means that $\strongUnigraphNumber{G} \le \vertexCoverNumber{G}$ holds.
    An example of a star coloring which is obtained by a minimum vertex cover is shown in Figure~\ref{fig:star_coloring_vertex_cover}. \qed
\end{proof}

Note that the bound is not tight.
Complete graphs are the worst examples.
Since complete graphs are unigraphs, the unigraph numbers and the strong unigraph numbers are one regardless of their size, while the vertex cover number of the complete graph with $n$ vertices is $n-1$.
It is an interesting open problem whether we can obtain a tighter bound.

\section{Unigraph numbers for trees}\label{sec:tree}
\begin{figure}[t]
    \centering
    \begin{subfigure}{0.20\linewidth}
        \centering
        \includegraphics[bb=0 0 37 115, scale=0.4]{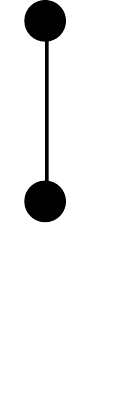}
        \caption{$K_2$.}
        \label{fig:K2}
    \end{subfigure}
    \begin{subfigure}{0.30\linewidth}
        \centering
        \includegraphics[bb=0 0 71 118, scale=0.4]{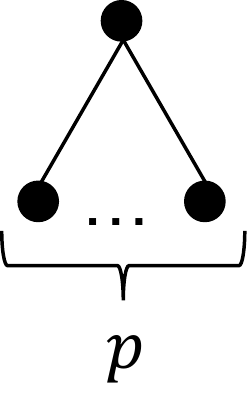}
        \caption{$K_{1, p}$, where $p \geq 2$.}
        \label{fig:star}
    \end{subfigure}
    \begin{subfigure}{0.45\linewidth}
        \centering
        \includegraphics[bb=0 0 168 118, scale=0.4]{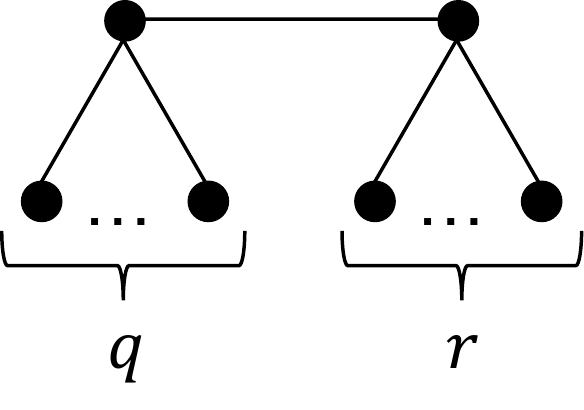}
        \caption{$S_{q, r}$, where $q \geq 1$ and $r \geq 1$.}
        \label{fig:bistar}
    \end{subfigure}
    \caption{Unigraphs in trees.}
    \label{fig:tree_unigraphs}
\end{figure}
In this section, we show that the unigraph number of a tree can be calculated in linear time.
Our proof consists of three steps.
First, we show that there are few patterns of unigraphs in a tree.
Second, we show that, for every tree, the unigraph number equals the \emph{edge domination number}.
An edge dominating set~\cite{yannakakis1980edge} of a graph $G$ is a set $E' \subseteq E(G)$ such that, for every edge $e \in E(G) \setminus E'$, there exists an edge $e' \in E'$ such that $e' \cap e \neq \emptyset$.
The edge domination number is the minimum size of edge dominating sets.
Third, using a linear-time algorithm to calculate the edge domination number of a tree~\cite{hedetniemi1977edge}, we calculate the unigraph number of the tree.

In the following, we use three patterns of graphs shown in Figure~\ref{fig:tree_unigraphs}: $K_2, K_{1, p}$, and $S_{q, r}$.
$K_2$ is the graph having only one edge (Figure~\ref{fig:K2}), $K_{1, p}\ (p \geq 2)$ is the star with $p$ leaves (Figure~\ref{fig:star}), and $S_{q, r}\ (q, r \geq 1)$ is the graph obtained by adding an edge between the two centers of $K_{1, q}$ and $K_{1, r}$ (Figure~\ref{fig:bistar}).

\begin{lemma}\label{le:tree_pattern}
    Let $G$ be a tree and $\func{c}{E(G)}{[k]}$ be a $k$-edge coloring of $G$.
    The edge coloring $c$ is a $k$-unigraphic coloring of $G$ if and only if, for each color $i \in [k]$, the color-$i$ subgraph is isomorphic to one of
    \begin{equation}
        K_2, K_{1, p}, S_{q, r},\label{eq:tree_unigraph}
    \end{equation}
    where $p \geq 2, q \geq 1,$ and $r \geq 1$ are positive integers.
\end{lemma}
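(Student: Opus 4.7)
The plan is to combine Corollary~\ref{cor:diameter} (every connected unigraph has diameter at most three) with the observation that any connected subgraph of a tree is itself a tree, and then dispose of the converse by a direct degree-sequence analysis for each of the three candidate patterns.

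For the ``only if'' direction, suppose $c$ is a $k$-unigraphic coloring and fix a color $i \in [k]$. Since the color-$i$ subgraph $G^c_i$ is a connected subgraph of the tree $G$, it is itself a tree; being a unigraph, its diameter is at most three by Corollary~\ref{cor:diameter}. I would then invoke the elementary structural fact that a tree with at least one edge and diameter at most three must be isomorphic to one of $K_2$ (diameter one), $K_{1, p}$ with $p \geq 2$ (diameter two), or $S_{q, r}$ with $q, r \geq 1$ (diameter three). This exhausts the list in~\eqref{eq:tree_unigraph}.

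For the ``if'' direction, it suffices to verify that each of $K_2$, $K_{1, p}$ ($p \geq 2$), and $S_{q, r}$ ($q, r \geq 1$) is a connected unigraph. Connectedness is immediate, and $K_2$ together with $K_{1, p}$ being unigraphs is standard. The only nontrivial task is to show that $S_{q, r}$ is a unigraph. I would argue as follows: given any graph $H$ realizing the degree set $\multiset{q+1, r+1, 1, 1, \dots, 1}$ (with $q+r$ ones), let $u, v$ be the two vertices of degrees $q+1$ and $r+1$. A counting argument forces $u$ and $v$ to be adjacent: if not, the $q+1$ and $r+1$ edges incident to $u$ and $v$ must all reach distinct leaves, requiring $q+r+2$ leaf-endpoints but only $q+r$ are available. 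Allowing some pairs of leaves to form isolated $K_2$ components only shrinks the available pool further, so that case is also impossible. Once $uv$ is an edge, the remaining $q$ and $r$ edges at $u$ and $v$ must go to distinct leaves, forcing $H \simeq S_{q, r}$.

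The main obstacle is this last case analysis, which has to handle both the symmetric subcase $q = r$ (where the two ``centers'' are indistinguishable by degree) and the possibility of spurious $K_2$ components among the degree-one vertices; the counting inequality above neatly rules both out simultaneously, so once the inequality is stated cleanly, the remainder of the argument is routine.
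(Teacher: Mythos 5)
Your proposal is correct and takes essentially the same route as the paper's proof: for the ``only if'' direction both arguments combine Corollary~\ref{cor:diameter} with the facts that a connected subgraph of a tree is a tree and that trees of diameter at most three are exactly $K_2$, $K_{1,p}$, and $S_{q,r}$. The only difference is that you spell out a counting argument showing $S_{q,r}$ is a unigraph, a step the paper simply asserts as standard; your verification of that step is sound.
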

\begin{proof}
    Since all the graphs of \eqref{eq:tree_unigraph} are unigraphs, the ``if'' part is correct.
    We show the ``only if'' part.
    If $c$ is a $k$-unigraphic coloring, by definition, for each color $i \in [k]$, the color-$i$ subgraph $G^c_i$ must be a unigraph.
    By Corollary~\ref{cor:diameter}, for each color $i \in [k]$, the diameter of $G^c_i$ must be at most three.
    Since $G^c_i$ is a subgraph of a tree $G$, $G^c_i$ is also a tree.
    A tree has the diameter at most three if and only if it is isomorphic to one of the graphs of \eqref{eq:tree_unigraph}. \qed
\end{proof}

\begin{lemma}\label{le:tree_dom}
    For every tree $G$, its unigraph number equals its edge domination number.
\end{lemma}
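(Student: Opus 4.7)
The plan is to prove the equality by establishing the two inequalities $\UnigraphNumber{G} \leq \edgeDomNumber{G}$ and $\UnigraphNumber{G} \geq \edgeDomNumber{G}$, using Lemma~\ref{le:tree_pattern} as the structural bridge. That lemma says each color-$i$ subgraph of a $k$-unigraphic coloring of a tree is isomorphic to $K_2$, $K_{1,p}$, or $S_{q,r}$---graphs that can each be described as a single ``central'' edge together with some edges adjacent to it. This matches exactly the local neighborhood of an edge in an edge dominating set, suggesting a natural correspondence between unigraphic colorings of trees and edge dominating sets.

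For $\UnigraphNumber{G} \leq \edgeDomNumber{G}$, I would take a minimum edge dominating set $D = \set{e_1, \dots, e_{\edgeDomNumber{G}}}$ of $G$ and define a coloring $\func{c}{E(G)}{[\edgeDomNumber{G}]}$ by setting $c(e_i) = i$, and for each edge $e \notin D$ picking any $e_j \in D$ with $e \cap e_j \neq \emptyset$ (which exists by definition of an edge dominating set) and setting $c(e) = j$. It then remains to check that each color-$i$ subgraph has one of the required shapes: because $G$ is a tree and every color-$i$ edge meets $e_i = \edge{u_i}{v_i}$, the edge-induced subgraph lies inside the double star around $e_i$, and depending on whether extra edges appear on neither, exactly one, or both sides of $e_i$, it is isomorphic to $K_2$, $K_{1,p}$, or $S_{q,r}$. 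By Lemma~\ref{le:tree_pattern}, $c$ is an $\edgeDomNumber{G}$-unigraphic coloring.

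For $\UnigraphNumber{G} \geq \edgeDomNumber{G}$, I would take any $\UnigraphNumber{G}$-unigraphic coloring $c$ and select one representative edge $e_i$ from each color-$i$ subgraph: an arbitrary edge if the subgraph is $K_2$ or $K_{1,p}$, and the central edge if it is $S_{q,r}$. I claim $D = \set{e_1, \dots, e_{\UnigraphNumber{G}}}$ is an edge dominating set of $G$. For any $e \in E(G)$, let $i = c(e)$; then either $e = e_i \in D$, or $e$ shares an endpoint with $e_i$, because in $K_{1,p}$ every pair of edges meets at the center and in $S_{q,r}$ the chosen central edge is incident to every other edge of the subgraph.

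The main, though modest, obstacle is the verification step in the first direction: confirming that in a tree, the edge-induced subgraph formed by $e_i$ together with an arbitrary selection of edges adjacent to $e_i$ really is one of the three allowed types, with no extra edges between leaves. This is guaranteed precisely because $G$ has no cycles, so once this observation is spelled out the rest of the proof is a clean case analysis driven by Lemma~\ref{le:tree_pattern}.
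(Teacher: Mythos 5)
Your proposal is correct and follows essentially the same route as the paper: both directions are proved via the correspondence given by Lemma~\ref{le:tree_pattern}, constructing a coloring from an edge dominating set by assigning each undominated edge the color of a dominating edge it meets, and conversely extracting a dominating set by picking one edge per color class (the central edge in the $S_{q,r}$ case). Your extra remark that acyclicity of the tree rules out stray edges between leaves of the double star is a fine way to justify the verification step the paper leaves implicit.
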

\begin{proof}
    Let $\func{c}{E(G)}{[k]}$ be a $k$-unigraphic coloring of $G$.
    We show that we can obtain an edge dominating set of size $k$ by choosing one edge from each color.
    By Lemma~\ref{le:tree_pattern}, for each color $i \in [k]$, the color-$i$ subgraph $G^c_i$ is isomorphic to one of the graphs of \eqref{eq:tree_unigraph}.
    If $G^c_i \simeq K_2$, we choose its only edge.
    If $G^c_i \simeq K_{1, p}$ for some integer $p \geq 2$, we choose an arbitrary edge of it.
    If $G^c_i \simeq S_{q, r}$ for some integers $q \geq 1$ and $r \geq 1$, we adopt the edge connecting the two centers of the stars $K_{1, q}$ and $K_{1, r}$.
    The set of the chosen edges is an edge dominating set.
    Figure~\ref{fig:tree_colorings} shows examples of a unigraphic coloring and an edge dominating set obtained by the above procedure.

    Let $E' = \set{e'_1, e'_2, \dots, e'_k} \subseteq E(G)$ be an edge dominating set with $k$ edges.
    We show that we can obtain a $k$-unigraphic coloring.
    Consider a $k$-edge coloring $\func{c}{E(G)}{[k]}$ which colors an edge $e'_i \in E'$ by the color $i$ and an edge $e \in E(G) \setminus E'$ by the minimum color $i \in [k]$ such that $e'_i \cap e \neq \emptyset$.
    Since $E'$ is an edge dominating set, all the edges are colored in this procedure.
    In this edge coloring, for each color $i \in [k]$, the color-$i$ subgraph is isomorphic to one of the graphs of \eqref{eq:tree_unigraph}.
    By Lemma~\ref{le:tree_pattern}, $c$ is a $k$-unigraphic coloring.
    Figure~\ref{fig:tree_edge_dominating_set} is an example of an edge dominating set and Figure~\ref{fig:tree_coloring} is a corresponding  unigrahic coloring. \qed
\end{proof}

\begin{figure}[t]
    \centering
    \begin{subfigure}{0.45\linewidth}
        \centering
        \includegraphics[bb=0 0 112 99, scale=0.6]{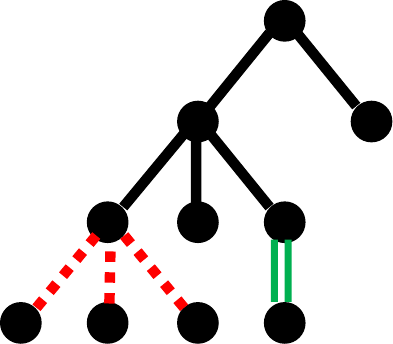}
        \caption{Unigraphic coloring. The black, the red, and the green subgraphs are isomorphic to $S_{3, 1}, K_{1, 3}$ and $K_{2}$, respectively.}
        \label{fig:tree_coloring}
    \end{subfigure}
    \hfill
    \begin{subfigure}{0.45\linewidth}
        \centering
        \includegraphics[bb=0 0 112 99, scale=0.6]{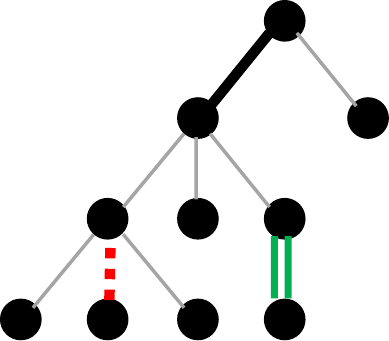}

        \caption{Edge dominating set. Adopted edges are drawn by bold lines and the others thin lines. The colors of bold edges correspond to those in Figure~\ref{fig:tree_coloring}.}
        \label{fig:tree_edge_dominating_set}
    \end{subfigure}
    \caption{Unigraphic coloring and a corresponding edge dominating set.}
    \label{fig:tree_colorings}
\end{figure}

\begin{theorem}
    There exists an algorithm which calculates the unigraph number of a tree in $\bigO{n}$ time.
\end{theorem}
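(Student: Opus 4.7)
The plan is to combine Lemma~\ref{le:tree_dom} with the known linear-time algorithm for the edge domination number on trees due to Hedetniemi~\cite{hedetniemi1977edge}. Specifically, given a tree $G$ on $n$ vertices as input, I would first invoke Hedetniemi's algorithm to compute $\edgeDomNumber{G}$ in $\bigO{n}$ time, and then simply output this value as $\UnigraphNumber{G}$. Correctness is immediate from Lemma~\ref{le:tree_dom}, which states that $\UnigraphNumber{G} = \edgeDomNumber{G}$ for every tree $G$, and the running time is $\bigO{n}$ because Hedetniemi's algorithm runs in $\bigO{n}$ on trees and because reading or traversing a tree with $n$ vertices (and thus $n-1$ edges) takes $\bigO{n}$ time.

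If a more self-contained presentation is preferred, I would briefly sketch Hedetniemi's algorithm as a dynamic program over a rooted version of $G$. Rooting $G$ at an arbitrary leaf, one processes the vertices in a post-order traversal and maintains, for each vertex $v$, the minimum size of an edge dominating set of the subtree rooted at $v$ in a constant number of states indicating whether the edge from $v$ to its parent still needs to be dominated or not. Each vertex is visited a constant number of times and only performs constant work per child, yielding the $\bigO{n}$ bound. The final answer is read off at the root.

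The only subtle point, and the place where one must be slightly careful, is to confirm that the reduction from Lemma~\ref{le:tree_dom} preserves the complexity: we do not actually need to produce a unigraphic coloring, only its minimum number of colors, so no extra work beyond computing $\edgeDomNumber{G}$ is required. If one did want to output an explicit $\UnigraphNumber{G}$-unigraphic coloring as well, this still fits within $\bigO{n}$ time by following the constructive direction of the proof of Lemma~\ref{le:tree_dom}: after obtaining a minimum edge dominating set $E' = \set{e'_1, \dots, e'_k}$ (which Hedetniemi's algorithm produces alongside its size), assign color $i$ to $e'_i$ and then, in a single traversal of $E(G) \setminus E'$, color each remaining edge by the smallest index $i$ such that it shares an endpoint with $e'_i$. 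Thus the main obstacle is essentially nonexistent, and the theorem follows as a direct corollary of Lemma~\ref{le:tree_dom} and~\cite{hedetniemi1977edge}.
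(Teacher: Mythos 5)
Your proposal is correct and follows essentially the same route as the paper: invoke the linear-time edge domination algorithm of~\cite{hedetniemi1977edge} and conclude via Lemma~\ref{le:tree_dom} that the computed value equals $\UnigraphNumber{G}$. The additional remarks on sketching the dynamic program and on extracting an explicit coloring are fine but not needed beyond the paper's two-line argument.
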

\begin{proof}
    Calculate the edge domination number of the tree in $\bigO{n}$ time~\cite{hedetniemi1977edge}.
    By Lemma~\ref{le:tree_dom}, it is also the unigraph number of the tree. \qed
\end{proof}

Note that the unigraphic coloring obtained by an edge dominating set is not necessarily a strongly unigraphic coloring.
Figure~\ref{fig:caterpillars} shows an anticase.
It is an interesting open problem whether we can calculate the strong unigraph number of a tree in polynomial time.

\begin{figure}[t]
    \centering
    \begin{subfigure}{.45\linewidth}
        \centering
        \includegraphics[bb=0 0 284 49, scale=0.5]{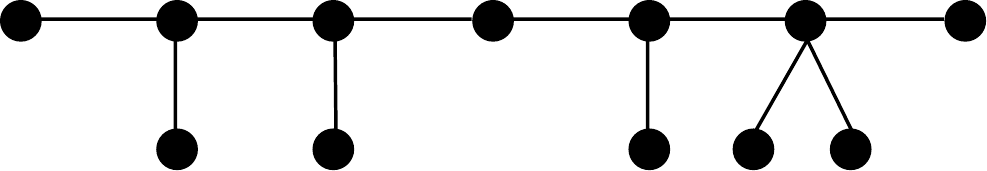}
        \caption{Tree.}
        \label{fig:caterpillar}
    \end{subfigure}
    \hfill
    \begin{subfigure}{.45\linewidth}
        \centering
        \includegraphics[bb=0 0 284 49, scale=0.5]{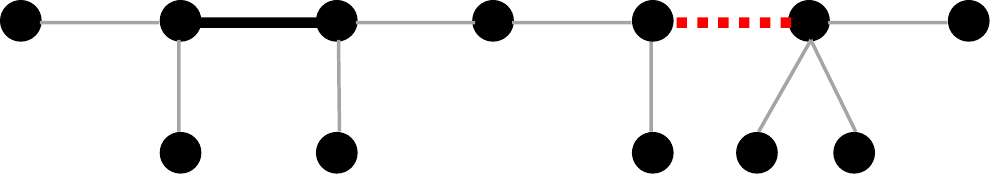}
        \caption{Edge dominating set.}
        \label{fig:caterpillar_edge_dom}
    \end{subfigure}
    \hfill
    \begin{subfigure}{.45\linewidth}
        \centering
        \includegraphics[bb=0 0 284 49, scale=0.5]{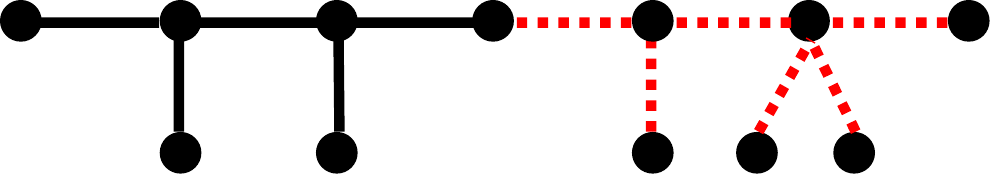}
        \caption{Unigraphic coloring obtained by the edge dominating set in Figure~\ref{fig:caterpillar_edge_dom}.}
        \label{fig:caterpillar_coloring}
    \end{subfigure}
    \hfill
    \begin{subfigure}{.45\linewidth}
        \centering
        \includegraphics[bb=0 0 284 49, scale=0.5]{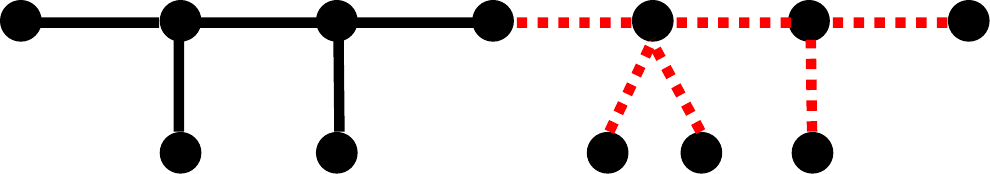}
        \caption{2-colored graph which has the same colored degree sequence as the colored graph in Figure~\ref{fig:caterpillar_coloring} but is not isomorphic to the graph in Figure~\ref{fig:caterpillar}.}
        \label{fig:caterpillar_not_iso}
    \end{subfigure}
    \caption{Unigraphic but not strongly unigraphic coloring of a tree.}
    \label{fig:caterpillars}
\end{figure}

\section{Concluding remarks}\label{sec:conclusions}
In this paper, we have introduced two types of generalizations of unigraphs: \emph{$k$-unigraphs} and \emph{$k$-strong unigraphs}.
We have described the relationship between $k$-(strong) unigraphs and the subgraph isomorphism problem and analyzed the basic properties of $k$-(strong) unigraphs.
This paper opens many research topics.
There are two main viewpoints to the $k$-(strong) unigraphs.

The first viewpoint is the algorithmic one.
When we are given a graph and a positive integer $k$, what is the complexity of the problem to determine whether the (strong) unigraph number of the graph is at most $k$?
The decision problem of the unigraph number is in NP because, when we are given a partition of the edge set of a graph with at most $k$ sets as a certificate, we can check whether each set forms a unigraph using the linear-time recognition algorithm.
However, whether the problem is in NP-complete or not remains open.
Moreover, as for the strong unigraph number, we do not know even whether the decision problem is in NP because, when we are given a colored degree sequence as a certificate, it is unclear that we can determine whether it has a unique realization in polynomial time.
If the calculation of the (strong) unigraph number of a graph is difficult, can we design approximation algorithms or fixed-parameter algorithms?
It is also important to utilize the (strong) unigraph number for other problems.
Is there any problem which can be solved easily for graphs with small (strong) unigraph numbers?
A hopeful example is the subgraph isomorphism problem, which we have mentioned in Section~\ref{sec:intro}.

The second viewpoint is the graph-theoretic one.
How is the (strong) unigraph number of a graph related to other invariants of the graph?
Is the (strong) unigraph number of a graph bounded for some graph classes?

We expect that the concept of $k$-(strong) unigraphs will open many future research topics.

\subsubsection{Acknowledgement}
This work was partly supported by JSPS KAKENHI Grant Numbers JP15H05711, JP18K04610, JP18H04091.

%
%
%
\bibliographystyle{splncs04}

\appendix
\section{Proof of Lemma~\ref{le:disconnected_unigraph}}
\label{sec:proof_disconnected_unigraph}
    We proof the lemma by contradiction.
    Assume that there are two distinct connected components $C_1$ and $C_2$ in $G$ such that both of them have at least three vertices.
    We choose an edge $e_i$ from $C_i\,(i \in [2])$ as follows:
    (a) If $C_i$ is not a tree, there exists an edge contained in some cycle. Let the edge be $e_i$. In this case, $e_i$ is not a bridge in $C_i$.
    (b) If $C_i$ is a tree, there exists a vertex whose degree is one. Let the (unique) edge incident to the vertex be $e_i$. In this case, $e_i$ is a bridge in $C_i$.
    In the following, we denote $e_i = \edge{u_i}{v_i}\,(i \in [2])$.

    We consider the following two cases and show that we can construct another graph $G'$ which has the same degree set as $G$ but is not isomorphic to $G$.
    If both $e_1$ and $e_2$ are chosen in the case (b), without loss of generality, we assume that $u_1$ and $u_2$ have the degrees one.
    Consider a graph $G'$ obtained from $G$ by deleting edges $e_1$ and $e_2$ and adding edges $e'_1 = \edge{u_1}{u_2}$ and $e'_2 = \edge{v_1}{v_2}$.
    Note that $D(G') = D(G)$.
    $C_1$ and $C_2$ in $G$ change into $C'_1$ and $C'_2$ in $G'$, whose set of vertices are $\set{u_1, u_2}$ and $(V(C_1) \setminus \set{u_1}) \cup (V(C_2) \setminus \set{u_2})$, respectively.
    While $|V(C_1)| \geq 3$ and $|V(C_2)| \geq 3$ hold, $|V(C'_1)| = 2 < 3$ holds.
    Therefore, $G'$ is not isomorphic to $G$.

    Otherwise, if at least one of $e_1$ or $e_2$ are chosen in the case (a), without loss of generality, we assume that $e_1$ is chosen in the case (a).
    In this case, $e_1$ is not a bridge of $C_1$.
    Consider a graph $G'$ obtained in the same way as in the above.
    Since $e_1$ was not a bridge of $C_1$, $C_1$ and $C_2$ in $G$ are merged into one connected component in $G'$.
    Therefore, the number of connected component in $G'$ is one less than that of $G$, meaning that $G'$ is not isomorphic to $G$ although $D(G') = D(G)$.
\end{document}